\newtheorem{theorem}{Theorem}[section]
\newtheorem{conjecture}[theorem]{Conjecture}
\newtheorem{problem}[theorem]{Problem}
\newtheorem{lemma}[theorem]{Lemma}
\theoremstyle{definition}
\newtheorem{defn}{Definition}[section]
\theoremstyle{remark}
\newtheorem*{rem}{Remark}
\renewcommand\@biblabel[1]{}
\begin{document}
\title{Edge-cuts Optimized for Average Weight: a new alternative to Ford and Fulkerson \footnote{An alternate version of this material was published in
Asia-Pacific Journal of Operational Research, Vol. 36, No. 2 (2019) 1940006 under the title {\em ``Edge-cuts of Optimal Average Weights''}}\footnote{This work is partially supported by an  
NIH Grant 1-R01-DC015901 and an NSF grant DMS-1700218}
}
\date{}
\author[1]{Scott Payne\thanks{spayne7@mix.wvu.edu}}
\author[2]{Edgar Fuller\thanks{efuller@fiu.edu}}
\author[1]{Cun-Quan Zhang\thanks{cqzhang@mail.wvu.edu}}
\affil[1]{Department of Mathematics, West Virginia University \newline Morgantown, WV 28506, USA}
\affil[2]{Department of Mathematics, Florida International University \newline Miami, FL 33199, USA}

\renewcommand\Authands{ and }

\maketitle


\begin{abstract}
Let $G$ be a directed graph associated with a weight $w: E(G) \rightarrow R^+$. For an edge-cut $Q$ of $G$, the average weight of $Q$ is denoted and defined as $w_{ave}(Q)=\frac{\sum_{e\in Q}w(e)}{|Q|}$. An edge-cut of optimal average weight is an edge-cut $Q$ such that $w_{ave}(Q)$ is maximum among all edge-cuts (or minimum, symmetrically).
 In this paper, a polynomial algorithm for this problem is proved for finding such an optimal edge-cut in a rooted tree, separating the root and the set of all leafs.
This algorithm
  enables us to develop an automatic clustering method with more accurate detection of communities embedded in a hierarchy tree structure.
\end{abstract}

\section{Introduction}
Max-Flow-Min-Cut is one of the oldest optimization problems in network theory~\cite{Ford1956}. It is also known that solving the max cut problem is NP complete~\cite{Garey1979,Karp1972}. In this paper, we define a new optimization problem for finding edge cuts in a special class of weighted directed graphs, specifically rooted weighted trees. These types of trees are typically used to represent relationships in hierarchical data structures. The minimum cut problem optimizes the sum of weights of edges in a given cut, over all cuts separating source and sink. Our new problem instead optimizes the average of the weights of edges in a given cut, over all root-separating cuts. Importantly, we propose a polynomial time algorithm for solving this optimization problem. Furthermore, the inequalities in the algorithm and proof may be adapted to solve either the minimization problem  {\it or} the maximization problem. Here we present the version for maximization, the version for minimization is similar and left to the reader.

\subsection{Applications for data mining}
In the development of clustering methods, one of the most challenging problems is identifying the level of association among data points, such as the vertices of a graph, that provides "a best" output of communities. It is known that for complex data there is no one optimization problem that can characterize a true optimal for all data sets \cite{Peele2017}, however in practice it is often necessary to find an output that is as good as possible according to some established criteria. In the case of hierarchical clustering, this question becomes the determination of which collection of cuts along the tree in the hierarchical dendrogram will be selected to form the set of communities as the final output. It has been observed that {\it ``There are no completely satisfactory algorithms that can be used for determining the number of population clusters for many types of cluster analysis"}~\cite{SAS08}.

In \cite{Qi2014}, the minimum-cut approach was introduced for the automatic selection of the final output of communities. Here we provide an alternate type of edge cut optimization problem for use in methods such as that of \cite{Qi2014}. We have observed that in many data sets with hierarchical structure, community structures at lower levels (i.e. larger number of communities) might not be discovered when the edge cut optimization is based on minimum-cut (which minimizes a sum of edge weights) as less elements of the sum might return a smaller value and hence less edges in the cut might return a more minimal sum. By optimizing an average of weights in an edge cut as opposed to the sum, edge-cut community detection may more freely return optimal solutions at lower levels, often referred to as ``high resolution'' clusters. This type of high resolution clustering has become an increasing focus in the neuroscientific field of connectomics where neuron to neuron synaptic connectivity data may naturally have a very large number of local clusters.

\subsection{Notation and Definitions}

A rooted tree  $T$ is  a directed graph whose underlying graph is a tree and, there is a given vertex $v_0$, called the root, such that, for every vertex $x \in V(T)$, the unique path of the tree
 from $v_0$ to $x$ is a directed path.

Let $T$ be a rooted, weighted tree with edge weight function $w: E(T) \rightarrow \mathbb{R}^+$ assigning a weight $w(e)$ to each edge $e\in T$. Let $v_0$ be the root of $T$. For any edge set $X \subseteq E(T)$ we set $w (X) := \sum_{e \in X} w(e)$ to be the sum of the weights of the edges in this subset. We denote by $E^+(v)$ the out edges of vertex $v$ and $E^+(e)$ to refer to the out edges of the vertex that is the head of edge $e$ when this notation is convenient. Set $\alpha_0 := \frac{w (E^+(v_0))}{|E^+(v_0)|}$. When we refer to edge cuts we will usually use the symbol $Q$. {\it Furthermore}, all the edge cuts discussed are cuts separating the root $v_0$ from the set $L$ of {\it leaf vertices} of $T$, we may refer to such cuts as {\it root-separating}. The algorithm presented here relies on the graph operation {\it edge contraction} which we denote using the standard notation $T/e$ when we contract the edge $e$ in the digraph $T$.

\begin{defn}
\label{contractability}
For $e \in E(T - L)$ we define the \textbf{contractibility of $e$}, denoted by $\lambda(e)$, with the following formula:
\[  \lambda(e) = \frac{w(E^+(e)) - w(e)}{|E^+(e)| - 1}
\]
\end{defn}
\begin{defn} We say an edge $e$ is {\it contractible} if $\lambda(e) > \alpha_0$.
\end{defn}
\begin{defn} The edge set $E(T - L)$ may be ordered $\lambda(e_1) \geqslant \lambda(e_2) \geqslant ... \geqslant \lambda(e_m)$. We may refer to this ordering as the {\it contractibility ordering}.
\end{defn}

\section{Optimization Problem and Algorithm}

\medskip \noindent{\bf Input.}
A rooted weighted tree $T$ with edge weight $w: E(T) \rightarrow \mathbb{R}^+$ and the root $v_0$ and set of leaves $L$.

\medskip \noindent{\bf Output.}
An edge cut $Q$ of $T$ separating the root $v_0$ and the set $L$ of
 leaves such that
 $\frac{\sum_{e \in Q}w(e)}{|Q|}$ is maximum among all such edge cuts.
\vspace{0.5 in}\\
\medskip \noindent The algorithm is then as follows.
\vskip .1in \noindent
{\bf Step 1.}
Determine
$$\alpha_0 = \frac{w(E^+(v_0))}{|E^+(v_0)|}.$$

\medskip \noindent
{\bf Step 2.}
Sort the edges $e_i$ of the $E(T-L)$ so that
$$\lambda(e_1) \geq \lambda(e_2) \geq ... \geq \lambda(e_m)$$
where $\lambda(e_i)$ is the {\bf contractibility of the edge $e_i$} as in definition \ref{contractability}.

\medskip \noindent
{\bf Step 3.}
If $\lambda(e_1) > \alpha_0$ then
\begin{quote}
\begin{enumerate}
\item[(i)]Denote the {\it in edge} to $e_1$ by $e^*$. Contract $T \leftarrow T/e_1$, and
\item[(ii)] update $\lambda$ value for $e^*$, or update $\alpha_0$ if $e_1$ had no {\it in edge} (it was in $E^+(v_0)$), and
\item[(iii)] repeat Step 2.
\end{enumerate}
\end{quote}
\medskip \noindent
If $\lambda(e_1) \leq \alpha_0$ then go to the END STEP.

\medskip \noindent
{\bf END STEP.} Output: $Q=E^+(v_0)$.

\begin{rem} The output $Q$ above is an edge set of the contracted graph resulting from the running of the algorithm, however $Q$ is also a subset of the original set of edges input to the algorithm. It is in the context of the input graph $T$ that the set $Q$ is the solution to the optimization problem presented here.
\end{rem}

Figures~\ref{FIG: input} and~\ref{FIG: output} illustrate an example of the output of this algorithm.

\setlength{\unitlength}{0.06cm}
\begin{figure}


\begin{center}

\begin{picture}(140,120)

\put(0,0){\circle*{2.6}}
\put(5,0){\circle*{2.6}}
\put(10,0){\circle*{2.6}}
\put(15,0){\circle*{2.6}}
\put(20,0){\circle*{2.6}}
\put(25,0){\circle*{2.6}}
\put(30,0){\circle*{2.6}}
\put(35,0){\circle*{2.6}}
\put(40,0){\circle*{2.6}}

\put(-2,10){$3$}
\put(2,10){$3$}
\put(6,10){$3$}

\put(13,10){$3$}
\put(17,10){$3$}
\put(22,10){$3$}

\put(28,10){$3$}
\put(32,10){$3$}
\put(36,10){$3$}

\put(50,0){\circle*{2.6}}
\put(55,0){\circle*{2.6}}
\put(60,0){\circle*{2.6}}
\put(65,0){\circle*{2.6}}
\put(70,0){\circle*{2.6}}
\put(75,0){\circle*{2.6}}
\put(80,0){\circle*{2.6}}
\put(85,0){\circle*{2.6}}
\put(90,0){\circle*{2.6}}

\put(48,10){$1$}
\put(52,10){$1$}
\put(56,10){$1$}

\put(63,10){$1$}
\put(67,10){$1$}
\put(72,10){$1$}

\put(78,10){$1$}
\put(82,10){$1$}
\put(86,10){$1$}

\put(100,0){\circle*{2.6}}
\put(105,0){\circle*{2.6}}
\put(110,0){\circle*{2.6}}
\put(115,0){\circle*{2.6}}
\put(120,0){\circle*{2.6}}
\put(125,0){\circle*{2.6}}
\put(130,0){\circle*{2.6}}
\put(135,0){\circle*{2.6}}
\put(140,0){\circle*{2.6}}

\put(98,10){$2$}
\put(102,10){$2$}
\put(106,10){$2$}

\put(113,10){$2$}
\put(117,10){$2$}
\put(122,10){$2$}

\put(128,10){$2$}
\put(132,10){$2$}
\put(136,10){$2$}

\put(5,40){\circle*{2.6}}
\put(20,40){\circle*{2.6}}
\put(35,40){\circle*{2.6}}
\put(55,40){\circle*{2.6}}
\put(70,40){\circle*{2.6}}
\put(85,40){\circle*{2.6}}
\put(105,40){\circle*{2.6}}
\put(120,40){\circle*{2.6}}
\put(135,40){\circle*{2.6}}

\put(4,50){$2$}
\put(16,50){$2$}
\put(32,50){$2$}

\put(54,50){$3$}
\put(66,50){$3$}
\put(82,50){$3$}

\put(104,50){$1$}
\put(116,50){$1$}
\put(132,50){$1$}

\put(20,80){\circle*{2.6}}
\put(70,80){\circle*{2.6}}
\put(120,80){\circle*{2.6}}

\put(24,90){$1$}
\put(67,90){$2$}
\put(115,90){$3$}

\put(70,120){\circle*{2.6}}
\put(64,122){$v_0$}

\qbezier(5,40)(5,40)(0,0)
\qbezier(5,40)(5,40)(5,0)
\qbezier(5,40)(5,40)(10,0)

\qbezier(20,40)(20,40)(15,0)
\qbezier(20,40)(20,40)(20,0)
\qbezier(20,40)(20,40)(25,0)

\qbezier(35,40)(35,40)(30,0)
\qbezier(35,40)(35,40)(35,0)
\qbezier(35,40)(35,40)(40,0)

\qbezier(55,40)(55,40)(50,0)
\qbezier(55,40)(55,40)(55,0)
\qbezier(55,40)(55,40)(60,0)

\qbezier(70,40)(70,40)(65,0)
\qbezier(70,40)(70,40)(70,0)
\qbezier(70,40)(70,40)(75,0)

\qbezier(85,40)(85,40)(80,0)
\qbezier(85,40)(85,40)(85,0)
\qbezier(85,40)(85,40)(90,0)

\qbezier(105,40)(105,40)(100,0)
\qbezier(105,40)(105,40)(105,0)
\qbezier(105,40)(105,40)(110,0)

\qbezier(120,40)(120,40)(115,0)
\qbezier(120,40)(120,40)(120,0)
\qbezier(120,40)(120,40)(125,0)

\qbezier(135,40)(135,40)(130,0)
\qbezier(135,40)(135,40)(135,0)
\qbezier(135,40)(135,40)(140,0)

\qbezier(20,80)(20,80)(5,40)
\qbezier(20,80)(20,80)(20,40)
\qbezier(20,80)(20,80)(35,40)

\qbezier(70,80)(70,80)(55,40)
\qbezier(70,80)(70,80)(70,40)
\qbezier(70,80)(70,80)(85,40)

\qbezier(120,80)(120,80)(105,40)
\qbezier(120,80)(120,80)(120,40)
\qbezier(120,80)(120,80)(135,40)

\qbezier(70,120)(70,120)(20,80)
\qbezier(70,120)(70,120)(70,80)
\qbezier(70,120)(70,120)(120,80)

\end{picture}
\end{center}
\caption{\small\it The input: a weighted tree with the root $v_0$ }
\label{FIG: input}
\end{figure}
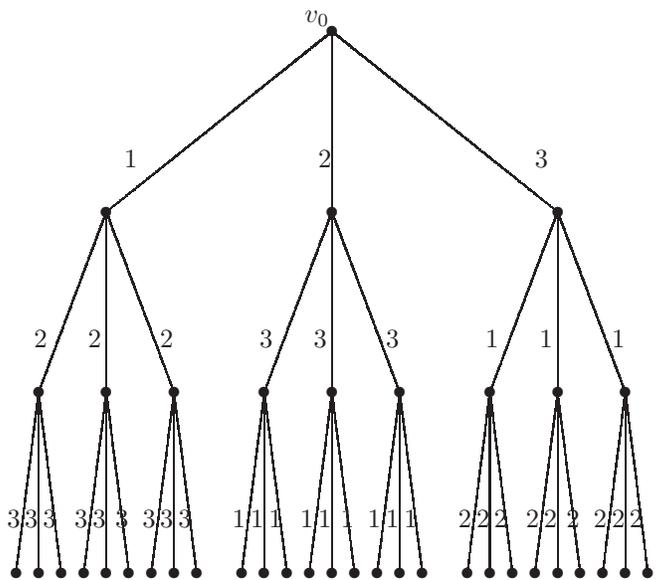

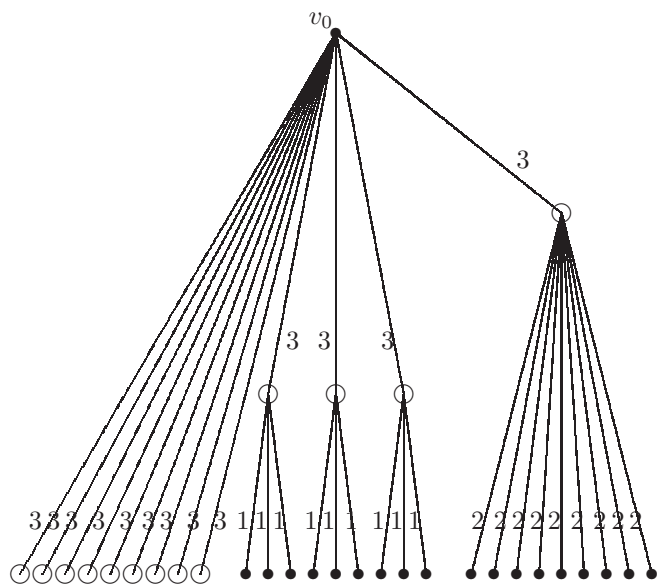
\begin{figure}
\begin{center}

\begin{picture}(140,120)

\put(0,0){\circle{4}}
\put(5,0){\circle{4}}
\put(10,0){\circle{4}}
\put(15,0){\circle{4}}
\put(20,0){\circle{4}}
\put(25,0){\circle{4}}
\put(30,0){\circle{4}}
\put(35,0){\circle{4}}
\put(40,0){\circle{4}}

\put(2,10){$3$}
\put(6,10){$3$}
\put(10,10){$3$}

\put(16,10){$3$}
\put(22,10){$3$}
\put(27,10){$3$}

\put(31,10){$3$}
\put(37,10){$3$}
\put(43,10){$3$}

\put(50,0){\circle*{2.6}}
\put(55,0){\circle*{2.6}}
\put(60,0){\circle*{2.6}}
\put(65,0){\circle*{2.6}}
\put(70,0){\circle*{2.6}}
\put(75,0){\circle*{2.6}}
\put(80,0){\circle*{2.6}}
\put(85,0){\circle*{2.6}}
\put(90,0){\circle*{2.6}}

\put(48,10){$1$}
\put(52,10){$1$}
\put(56,10){$1$}

\put(63,10){$1$}
\put(67,10){$1$}
\put(72,10){$1$}

\put(78,10){$1$}
\put(82,10){$1$}
\put(86,10){$1$}

\put(100,0){\circle*{2.6}}
\put(105,0){\circle*{2.6}}
\put(110,0){\circle*{2.6}}
\put(115,0){\circle*{2.6}}
\put(120,0){\circle*{2.6}}
\put(125,0){\circle*{2.6}}
\put(130,0){\circle*{2.6}}
\put(135,0){\circle*{2.6}}
\put(140,0){\circle*{2.6}}

\put(100,10){$2$}
\put(105,10){$2$}
\put(109,10){$2$}

\put(113,10){$2$}
\put(117,10){$2$}
\put(122,10){$2$}

\put(127,10){$2$}
\put(131,10){$2$}
\put(135,10){$2$}


\put(55,40){\circle{4}}
\put(70,40){\circle{4}}
\put(85,40){\circle{4}}

\put(59,50){$3$}
\put(66,50){$3$}
\put(80,50){$3$}

\put(120,80){\circle{4}}

\put(110,90){$3$}

\put(70,120){\circle*{2.6}}
\put(64,122){$v_0$}

\qbezier(70,120)(70,120)(0,0)
\qbezier(70,120)(70,120)(5,0)
\qbezier(70,120)(70,120)(10,0)

\qbezier(70,120)(70,120)(15,0)
\qbezier(70,120)(70,120)(20,0)
\qbezier(70,120)(70,120)(25,0)

\qbezier(70,120)(70,120)(30,0)
\qbezier(70,120)(70,120)(35,0)
\qbezier(70,120)(70,120)(40,0)

\qbezier(55,40)(55,40)(50,0)
\qbezier(55,40)(55,40)(55,0)
\qbezier(55,40)(55,40)(60,0)

\qbezier(70,40)(70,40)(65,0)
\qbezier(70,40)(70,40)(70,0)
\qbezier(70,40)(70,40)(75,0)

\qbezier(85,40)(85,40)(80,0)
\qbezier(85,40)(85,40)(85,0)
\qbezier(85,40)(85,40)(90,0)

\qbezier(120,80)(120,80)(100,0)
\qbezier(120,80)(120,80)(105,0)
\qbezier(120,80)(120,80)(110,0)

\qbezier(120,80)(120,80)(115,0)
\qbezier(120,80)(120,80)(120,0)
\qbezier(120,80)(120,80)(125,0)

\qbezier(120,80)(120,80)(130,0)
\qbezier(120,80)(120,80)(135,0)
\qbezier(120,80)(120,80)(140,0)

\qbezier(70,120)(70,120)(55,40)
\qbezier(70,120)(70,120)(70,40)
\qbezier(70,120)(70,120)(85,40)

\qbezier(70,120)(70,120)(120,80)

\end{picture}
\end{center}
\caption{\small\it The output (after contractions): $E^+(v_0)$ is the optimal average edge-cut}
\label{FIG: output}
\end{figure}

\vskip 0.1 in
In the next section, we will prove that the algorithm provides an optimal solution. That is, we prove the following theorem as our main result.

\begin{theorem} \label{maintheorem}
The output $Q$ of the algorithm is an edge-cut of the input rooted tree $T$ with the average weight
$$\frac{\sum_{e \in Q}w(e)}{|Q|}$$ maximum among all edge-cuts separating the root $v_0$ and the set $L$
of leaves.
\end{theorem}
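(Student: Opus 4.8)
The plan is to recast root-separating cuts as \emph{rooted subtrees} and then to reduce the ratio objective to a family of additive problems. First I would record the bijection between minimal root-separating cuts and vertex sets $S$ with $v_0\in S$ that are ancestor-closed and contain no leaf: for such an $S$ the associated cut is the set $\partial^+S$ of edges leaving $S$, and every minimal cut arises this way. Under this correspondence the algorithm simply maintains the root ``blob'' $R$ (the vertices contracted into $v_0$), its cut is $E^+(v_0)=\partial^+R$, and $\alpha_0=w_{ave}(\partial^+R)$; a contraction of an edge in $E^+(v_0)$ enlarges $R$, while a contraction of a deeper edge merely \emph{bundles} a subtree so that a later push of the frontier crosses several edges at once. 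I would phrase the whole argument in terms of the signed potential $\phi_\lambda(S)=\sum_{e\in\partial^+S}\bigl(w(e)-\lambda\bigr)$, using the elementary equivalence $w_{ave}(\partial^+S)>\lambda \iff \phi_\lambda(S)>0$.

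The engine is a mediant (weighted-average) lemma. If a frontier of average $\alpha$ contains a non-leaf edge $e$ and we replace $e$ by any frontier $F$ of the subtree below $e$, the new average is the mediant of $\alpha$ and $\rho(e,F):=\frac{w(F)-w(e)}{|F|-1}$; hence the push strictly increases the average exactly when $\rho(e,F)>\alpha$. The one-step case $F=E^+(e)$ gives $\rho=\lambda(e)$, which is Definition~\ref{contractability}, and a short induction (again via the mediant identity) shows that each contraction updates $\lambda$ of the parent edge to $\rho$ of the \emph{bundled} frontier, so that at any stage the stored contractibility of an edge $e$ whose head-blob is $B$ equals $\rho(e,\partial^+B)$, the rate of pushing past all of $B$ at once. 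With this in hand, Step~3 shows $\alpha_0$ is nondecreasing and strictly increases at every root contraction, which both proves termination (each contraction removes a vertex) and certifies that the output $Q=\partial^+R$ is a feasible cut of some average $\alpha^\ast$; it then remains only to prove that $\alpha^\ast$ is maximum.

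For optimality I would fix $\lambda=\alpha^\ast$ and prove $\phi_{\alpha^\ast}(S)\le 0$ for every feasible $S$. Since $\phi_{\alpha^\ast}(R)=0$ by the definition of $\alpha^\ast$, this is exactly the assertion that no cut beats $\alpha^\ast$. The natural tool is the tree recursion $f(v)=\sum_{u}\max\bigl(w(e_u)-\alpha^\ast,\,f(u)\bigr)$ over the children $u$ of $v$ (a leaf child $u$ contributing only the cut term $w(e_u)-\alpha^\ast$), whose root value $f(v_0)$ equals $\max_S\phi_{\alpha^\ast}(S)$; the goal becomes $f(v_0)\le 0$. Note that $\rho(e_v,F)\le\alpha^\ast$ for every frontier $F$ below $v$ is equivalent to the additive inequality $f(v)\le w(e_v)-\alpha^\ast$, so I would aim to establish exactly the invariant that $f(v)\le w(e_v)-\alpha^\ast$ for every vertex $v$ lying below the final frontier, together with the matching ``continue is optimal'' inequality at edges internal to $R$; these two facts make the recursion evaluate along $\partial^+R$ and return $\phi_{\alpha^\ast}(R)=0$.

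The hard part will be the invariant just stated, namely that no partial push below a boundary edge of $R$ can achieve rate exceeding $\alpha^\ast$, that is $\rho(e,F)\le\alpha^\ast$ for \emph{every} intermediate frontier $F$ below a boundary edge $e$, not merely for $F=\partial^+B$. The termination test only supplies $\rho(e,\partial^+B)\le\alpha^\ast$, and the genuine danger is a high-rate intermediate frontier whose contribution is diluted when the whole blob is taken. The resolution I would pursue exploits the global decreasing-$\lambda$ processing together with the monotone rise of the threshold $\alpha_0$: by the mediant lemma a bundled rate increases precisely when it absorbs a child whose contractibility exceeds that rate, so any subtree admitting a frontier of rate above $\alpha^\ast$ would, as edges are contracted in contractibility order, propagate a contractibility exceeding the current $\alpha_0$ up to the boundary edge $e$ and force $e$ itself to be absorbed into $R$ before termination, contradicting that $e$ is a boundary edge. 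Making this propagation argument precise against a \emph{moving} threshold, while simultaneously controlling a ratio rather than an additive objective, is the principal obstacle; I expect to discharge it by induction on the reverse order of contractions, carrying the invariant that every edge strictly below the current frontier has bundled rate at most the current $\alpha_0$, which at termination is exactly $\alpha^\ast$.
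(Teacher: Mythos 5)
Your reduction framework is sound as far as it goes: the correspondence between root-separating cuts and ancestor-closed vertex sets, the mediant lemma, the identity between the stored contractibility of an edge and $\rho(e,\partial^+B)$ for its head-blob $B$, the monotone growth of $\alpha_0$, and the reformulation of optimality as $f(v_0)\le 0$ at $\lambda=\alpha^\ast$ are all correct, and they correctly isolate what must be proved: that $\rho(e,F)\le\alpha^\ast$ for \emph{every} frontier $F$ below every boundary edge $e$ of the final blob $R$, not merely for the bundled frontier $F=\partial^+B_e$ that the termination test inspects. But that isolated statement is the entire content of the theorem, and your proposal does not prove it; you explicitly defer it as ``the principal obstacle.'' Moreover, the invariant you propose to carry in the reverse induction --- that every edge strictly below the current frontier has bundled rate at most the current $\alpha_0$ --- is false at intermediate stages of the run. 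Take $v_0\to a$ of weight $1$, where $a$ has a leaf child of weight $1$ and a non-leaf child $b$ whose two children are leaves of weight $100$: at the start, the edge $ab$ lies strictly below the frontier $E^+(v_0)$ and has $\lambda(ab)=199>1=\alpha_0$. Your invariant becomes true only at termination, where it is literally the termination condition, so an induction carrying it has nothing to stand on; indeed any invariant of this shape, quantified against the moving threshold, must fail mid-run, because the algorithm's whole purpose is to process edges whose rates still exceed the current threshold.

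The missing ingredient is the step the paper isolates as its Lemma 3: if $e_1$ is the edge of globally maximum contractibility and $\lambda(e_1)>\alpha_0$, then some optimal cut of $T$ avoids $e_1$, so contracting $e_1$ preserves the optimal value. Crucially, this is proved not by tracking the algorithm's history against a moving $\alpha_0$, but by a static double-extremality exchange argument against a fixed optimal cut: choose an optimal cut $Q_0$ whose internal subtree is smallest; if $e_1\in Q_0$ then $w(Q_0)/|Q_0|\ge\lambda(e_1)$ (otherwise your mediant push at $e_1$ strictly improves $Q_0$), and applying the reverse exchange (pulling $E^+(e)$ back to $e$) at a leaf edge $e$ of the internal subtree either produces an equally good cut with a smaller internal subtree (contradicting minimality) or forces $\lambda(e)>w(Q_0)/|Q_0|\ge\lambda(e_1)$, contradicting the maximality of $e_1$ in the contractibility ordering. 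Iterating this lemma reduces the problem to the fully contracted tree, where one further application of the same exchange shows $E^+(v_0)$ is optimal. Your mediant lemma already contains both exchange inequalities needed, so the repair is to replace the history-based propagation sketch with this fixed-cut argument; as written, your proposal establishes feasibility and termination of the algorithm but not optimality of its output.
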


\section{Proof of Optimality}

\subsection{Lemmas}

Before the proof of Theorem~\ref{maintheorem}, we need the following lemmas.

\begin{lemma} \label{replacementlemma} Let $Q$ be a root-separating cut of $T$ and let $e \in Q $ but $e$ is not a leaf-edge of $T$. If $\lambda(e) > \frac{w(Q)}{|Q|} $, then $\exists$ $Q' \neq Q$ with $\frac{w(Q')}{|Q'|} > \frac{w(Q)}{|Q|}$. Specifically, $Q' = (Q \setminus \lbrace e \rbrace)\cup E^+(e)$.
\end{lemma}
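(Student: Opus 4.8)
The plan is to take the explicitly proposed replacement $Q' = (Q \setminus \{e\}) \cup E^+(e)$ and establish two facts in sequence: that $Q'$ is again a root-separating cut, and that its average weight strictly exceeds that of $Q$. The quantitative heart is a one-line mediant-type inequality; the genuine work is the combinatorial bookkeeping that makes the cardinality and weight of $Q'$ come out in closed form.

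First I would verify that $Q'$ separates $v_0$ from $L$. Since $Q$ is a cut, every directed path $P$ from $v_0$ to a leaf meets $Q$, and I split into cases by how $P$ crosses $Q$. If $P$ crosses at some edge $e' \neq e$, then $e' \in Q \setminus \{e\} \subseteq Q'$, so $P$ still meets $Q'$. If instead $P$ crosses at $e$, then because $e$ is not a leaf-edge its head $u$ is not a leaf, so $P$ must continue from $u$ along some out-edge $f \in E^+(u) = E^+(e) \subseteq Q'$; hence $P$ meets $Q'$ at $f$. In either case $P$ meets $Q'$, so $Q'$ is root-separating.

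Next I would pin down $|Q'|$ and $w(Q')$ exactly. Working with minimal root-separating cuts, so that each $v_0$-to-leaf path crosses $Q$ exactly once, no edge of $E^+(e)$ can already lie in $Q$: such an edge sits strictly below the crossing point $e$ on every path through it, and that path has already crossed at $e$. Thus $E^+(e)$ is disjoint from $Q \setminus \{e\}$, giving the clean identities $|Q'| = |Q| - 1 + |E^+(e)|$ and $w(Q') = w(Q) - w(e) + w(E^+(e))$. This disjointness is the step I expect to require the most care, since it is really the only place the tree structure (and the minimality of the cut) is used; the rest is formal.

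Finally, set $W = w(Q)$, $N = |Q|$, $s = w(E^+(e)) - w(e)$ and $t = |E^+(e)| - 1$, noting $t \geq 1$ because $\lambda(e)$ being defined forces $|E^+(e)| \geq 2$. The target inequality $\frac{W+s}{N+t} > \frac{W}{N}$, after clearing the (positive) denominators, is equivalent to $N s > W t$, i.e. to $\frac{s}{t} > \frac{W}{N}$. But $\frac{s}{t}$ is exactly $\lambda(e)$ and $\frac{W}{N}$ is exactly $\frac{w(Q)}{|Q|}$, so this last inequality is precisely the hypothesis $\lambda(e) > \frac{w(Q)}{|Q|}$. Hence $\frac{w(Q')}{|Q'|} > \frac{w(Q)}{|Q|}$, and since $E^+(e)$ is nonempty we also have $Q' \neq Q$, completing the argument. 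The only real obstacle is the combinatorial disjointness claim of the middle paragraph; the concluding estimate is just the observation that splicing in a block of strictly larger ratio raises the overall average.
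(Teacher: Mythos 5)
Your proof is correct, and its analytic core is the same as the paper's: the paper invokes the mediant implication $\frac{a}{b} > \frac{c}{d} \Rightarrow \frac{a+c}{b+d} > \frac{c}{d}$ with $a = w(E^+(e)) - w(e)$, $b = |E^+(e)|-1$, $c = w(Q \setminus \{e\}) + w(e)$, $d = |Q \setminus \{e\}|+1$, which is exactly your clearing-of-denominators step with $s=a$, $t=b$, $W=c$, $N=d$. Where you genuinely add something is in what you prove rather than assume. The paper never checks that $Q'$ is root-separating, and, more importantly, it silently uses the identities $w(Q') = w(Q\setminus\{e\}) + w(E^+(e))$ and $|Q'| = |Q\setminus\{e\}| + |E^+(e)|$, which require the disjointness $E^+(e) \cap (Q\setminus\{e\}) = \emptyset$ that you single out as the delicate point.

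Your instinct there is exactly right, and your appeal to minimality (each root-to-leaf path meets $Q$ exactly once) is not pedantry but necessary: for an arbitrary separating edge set the disjointness can fail, and with it the lemma itself. For example, let $v_0$ have a leaf child reached by $h$ of weight $1$ and an internal child $u$ reached by an edge of weight $1$; let $u$ have one child $x$ reached by $e$ of weight $5$; and let $x$ have two leaf children reached by $f$ (weight $10$) and $g$ (weight $1$). The redundant separating set $Q = \{e, f, h\}$ has average $16/3$ and $\lambda(e) = 6 > 16/3$, yet $Q' = (Q\setminus\{e\}) \cup E^+(e) = \{f,g,h\}$ has average $4$, and in fact no separating set of this tree has average exceeding $16/3$. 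So the lemma is true only under the minimal-cut reading of ``edge-cut'' that you adopt --- which is also the reading the paper needs elsewhere (e.g. for its internal subtree $H$ and for the application to an optimal cut in Lemma 3). In short: same key inequality as the paper, but your version makes explicit, and correctly justifies, a hypothesis that the paper's proof uses tacitly.
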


\begin{proof} By the given conditions that $\lambda(e) > \frac{w(Q)}{|Q|} $,
 we have the following.
\begin{equation} \label{lambdaineq2}
  \frac{w(E^+(e)) - w(e)}{| E^+(e) | - 1} = \lambda(e) ~ > ~
   \frac{w(Q)}{|Q|} = \frac{w(Q \setminus \lbrace e \rbrace) + w(e)}{| Q \setminus \lbrace e \rbrace | + 1}
\end{equation}
We will use the following classical inequality
\begin{equation} \label{classic2}
\frac{a}{b} > \frac{c}{d}\implies \frac{a + c}{b + d} > \frac{c}{d}
\end{equation}
and define $a,b,c,d$ as follows.
\begin{eqnarray}\nonumber
a & = & w(E^+(e)) - w(e) \\\nonumber
b & = & | E^+(e) | - 1 \\\nonumber
c & = & w(Q \setminus \lbrace e \rbrace) + w(e) \\\nonumber
d & = & | Q \setminus \lbrace e \rbrace | + 1.
\end{eqnarray}
Then,
 with the above definitions,
 Inequality (\ref{lambdaineq2}) is the LHS of the implication (\ref{classic2}). The RHS of implication (\ref{classic2}) is as follows.
\begin{equation} \label{trade2}
\frac{w(Q \setminus \lbrace e \rbrace) + w(E^+(e))}{| Q \setminus \lbrace e \rbrace | + |E^+(e)|} >  \frac{w(Q \setminus \lbrace e \rbrace) + w(e)}{| Q \setminus \lbrace e \rbrace | + 1} = \frac{w(Q)}{|Q|}.
\end{equation}
Inequality (\ref{trade2}) says that the cut $Q' = (Q \setminus \lbrace e \rbrace)\cup E^+(e)$ has the average weight {\it greater than} the average weight of $Q$.
\end{proof}

\begin{defn} Let $Q$ be a root-separating cut of $T$ with $Q \neq E^+(v_0)$. Let $H$ be the component of $T-Q$ such that the root
 $v_0 \in V(H)$. For terminology we will refer to $H$ as {\it the subtree (of T) internal to Q}, or we may say $H$ {\it is the internal subtree of} \;$Q$.
\end{defn}

\begin{lemma} \label{internallem} Let $Q \neq E^+(v_0)$ be a root-separating cut of $T$ and let $H$ be the subtree internal to Q. Let $e \in H$ be a leaf edge of $H$. Then at least one of the following holds.
\begin{description}
\item[(i)] $\lambda(e) > \frac{w(Q)}{|Q|} $
\item[(ii)] $\exists$ $Q' \neq Q$ with internal subtree $H'$ such that $|E(H')| < |E(H)|$ and $\frac{w(Q')}{|Q'|} \geqslant \frac{w(Q)}{|Q|}$. Specifically, $Q' = (Q \setminus E^+(e))\cup \lbrace e \rbrace$.
\end{description}
\end{lemma}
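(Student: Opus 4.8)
The plan is to prove the dichotomy in contrapositive form: I assume that alternative \textbf{(i)} fails, that is $\lambda(e) \leqslant \frac{w(Q)}{|Q|}$, and from this I derive alternative \textbf{(ii)}. The first thing to record is a structural observation that makes the candidate cut $Q' = (Q \setminus E^+(e)) \cup \lbrace e \rbrace$ well-defined: since $Q$ is root-separating, the internal subtree $H$ contains no leaf of $T$, and since $e$ is a leaf edge of $H$ its head vertex $u$ has no descendants inside $H$. Consequently every out-edge of $u$ must already cross the cut, i.e. $E^+(e) \subseteq Q$. This lets me write $Q = R \cup E^+(e)$ and $Q' = R \cup \lbrace e \rbrace$ with $R := Q \setminus E^+(e)$, so that $w(Q) = w(R) + w(E^+(e))$, $|Q| = |R| + |E^+(e)|$, $w(Q') = w(R) + w(e)$ and $|Q'| = |R| + 1$.

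The heart of the argument is a mirror image of the computation in Lemma~\ref{replacementlemma}. There, a strict inequality $\lambda(e) > w(Q)/|Q|$ was fed into the mediant inequality (\ref{classic2}) to push the average up when a single edge is replaced by $E^+(e)$; here I run the same inequality in the opposite direction. Concretely, I will invoke the implication $\frac{a}{b} \leqslant \frac{c}{d} \implies \frac{c-a}{d-b} \geqslant \frac{c}{d}$ (valid for $0 < b < d$, and itself just the statement that the mediant $\frac{c}{d} = \frac{a+(c-a)}{b+(d-b)}$ lies between $\frac{a}{b}$ and $\frac{c-a}{d-b}$), applied to
$$a = w(E^+(e)) - w(e), \quad b = |E^+(e)| - 1, \quad c = w(Q), \quad d = |Q|.$$
With these choices the hypothesis $\frac{a}{b} = \lambda(e) \leqslant \frac{w(Q)}{|Q|} = \frac{c}{d}$ is exactly the negation of \textbf{(i)}, while $\frac{c-a}{d-b}$ simplifies to $\frac{w(R) + w(e)}{|R| + 1} = \frac{w(Q')}{|Q'|}$. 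The conclusion of the implication therefore reads $\frac{w(Q')}{|Q'|} \geqslant \frac{w(Q)}{|Q|}$, which is the inequality demanded in \textbf{(ii)}.

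It then remains to verify the two structural claims in \textbf{(ii)}: that $Q'$ is a genuine root-separating cut and that its internal subtree $H'$ is strictly smaller than $H$. For the former, removing $e$ together with $R$ leaves $u$ and the subtrees formerly hanging below $E^+(e)$ in a single component that excludes $v_0$, so every leaf that $Q$ separated from the root remains separated, and $Q'$ again meets each root-to-leaf path. For the latter, the new internal subtree is exactly $H' = H - u$: the leaf $u$ of $H$ and its incident edge $e$ are precisely what is peeled off, giving $|E(H')| = |E(H)| - 1 < |E(H)|$.

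The only genuinely delicate point is the degenerate case $|E^+(e)| = 1$, where the denominator $b = |E^+(e)| - 1$ vanishes and $\lambda(e)$ is not literally a real number; I expect to dispose of it directly rather than through (\ref{classic2}). When $u$ has a single child, $Q'$ is obtained from $Q$ by swapping one edge for one edge, so $|Q'| = |Q|$ and the averages compare simply via $w(e)$ versus $w(E^+(e))$: if $w(e) \geqslant w(E^+(e))$ then \textbf{(ii)} holds outright, and otherwise the natural convention $\lambda(e) = +\infty$ makes \textbf{(i)} hold. Apart from this bookkeeping, the argument is short, its only real content being the reversed use of the mediant inequality already established for Lemma~\ref{replacementlemma}.
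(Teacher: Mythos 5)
Your proof is correct and follows essentially the same route as the paper: assuming \textbf{(i)} fails, apply the mediant inequality to the decomposition $Q = (Q \setminus E^+(e)) \cup E^+(e)$ and conclude that $Q' = (Q \setminus E^+(e)) \cup \lbrace e \rbrace$ has average weight at least $\frac{w(Q)}{|Q|}$ with a strictly smaller internal subtree, exactly as in the paper (your parameterization of the mediant fact is just the paper's inequality (\ref{classic1}) written in subtracted form). In fact your write-up is somewhat more careful than the paper's, since you justify the implicit containment $E^+(e) \subseteq Q$, check that $Q'$ is genuinely root-separating, and dispose of the degenerate case $|E^+(e)| = 1$ (where $\lambda(e)$ has a zero denominator), which the paper's proof passes over in silence.
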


\begin{proof} Suppose $\mathbf{(i)}$ is not true.
That is,
$\frac{w(Q)}{|Q|} \geq
  \lambda(e)$.
Thus, we have the following.
\begin{equation}
\label{lambdaineq}
\frac{w(Q \setminus E^+(e)) + w(E^+(e))}{|Q \setminus E^+(e)| + |E^+(e)|} = \frac{w(Q)}{|Q|} ~
\geqslant ~
\lambda(e) = \frac{w(E^+(e)) - w(e)}{|E^+(e)| - 1}
\end{equation}
We will use the following classical inequality.
\begin{equation} \label{classic1}
\frac{a + c}{b + d} \geqslant \frac{c}{d}\implies \frac{a}{b} \geqslant \frac{a + c}{b + d}
\end{equation}
Define $a,b,c,d$ as follows.
\begin{eqnarray}\nonumber
a & = & w(Q \setminus E^+(e)) + w(e) \\\nonumber
b & = & | Q \setminus E^+(e) | +1 \\\nonumber
c & = & w(E^+(e)) - w(e) \\\nonumber
d & = & | E^+(e) | - 1
\end{eqnarray}
Then with the above definitions, inequality (\ref{lambdaineq}) is the LHS of the implication (\ref{classic1}). The RHS of implication (\ref{classic1}) is as follows.
\begin{equation} \label{leaftrade}
\frac{w(Q \setminus E^+(e)) + w(e)}{| Q \setminus E^+(e) | +1} \geqslant  \frac{w(Q \setminus E^+(e)) + w(E^+(e))}{|Q \setminus E^+(e)| + |E^+(e)|} = \frac{w(Q)}{|Q|}
\end{equation}
Inequality (\ref{leaftrade}) says that the cut $Q' = (Q \setminus E^+(e))\cup \lbrace e \rbrace$ has the average weight {\it at least} the average weight of $Q$. Clearly if $H'$ is the internal subtree of $Q'$ then $|E(H')| < |E(H)|$. So condition $\mathbf{(ii)} $ holds as desired.
\end{proof}

\vspace{0.2 in}
\begin{lemma} \label{cqlemma} Let $e_1$ be the maximum edge in the contractibility ordering of $E(T - L)$. Assume that $\lambda(e_1) > \alpha_0$. Then an optimal solution of $T' = T/e_1$ is also an optimal solution of $T$.
\end{lemma}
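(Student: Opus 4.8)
The plan is to reduce the statement to one structural fact about $T$ and then establish that fact by a short case analysis driven by the maximality of $\lambda(e_1)$. First I would record the correspondence between the cuts of $T' = T/e_1$ and the cuts of $T$. Writing $e_1 = (u,v)$ with $u$ the tail and $v$ the head, contraction merges $u$ and $v$ and deletes $e_1$ while leaving the leaf set unchanged (recall $e_1 \in E(T-L)$ is an internal edge). A root-separating cut of $T'$ is then exactly a root-separating cut of $T$ that does not contain $e_1$, and under this identification the edge sets, their weights $w(\cdot)$, and their cardinalities all coincide. Hence the average weight is preserved, and the optimum over $T'$ equals the optimum of $w(Q)/|Q|$ taken over all root-separating cuts $Q$ of $T$ with $e_1 \notin Q$. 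Since this is an optimum over a subset of the cuts of $T$, it is at most the optimum over all cuts of $T$; so the lemma reduces to showing that \emph{some} optimal cut of $T$ avoids $e_1$.

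To produce such a cut I would start from an arbitrary optimal cut $Q^*$ and assume $e_1 \in Q^*$, since otherwise there is nothing to prove. The argument then splits according to how $w(Q^*)/|Q^*|$ compares with $\lambda(e_1)$. If $\lambda(e_1) > w(Q^*)/|Q^*|$, then Lemma~\ref{replacementlemma} applied to $e = e_1$ yields the cut $(Q^* \setminus \{e_1\}) \cup E^+(e_1)$ of strictly larger average weight, contradicting the optimality of $Q^*$; so this case cannot occur. If $\lambda(e_1) = w(Q^*)/|Q^*|$, then the equality version of the mediant inequality used in Lemma~\ref{replacementlemma} shows that $(Q^* \setminus \{e_1\}) \cup E^+(e_1)$ has the \emph{same} average weight as $Q^*$ and omits $e_1$; this is exactly the desired optimal cut avoiding $e_1$.

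The remaining case, $w(Q^*)/|Q^*| > \lambda(e_1)$, is where the real work lies, and I expect it to be the main obstacle. Here I would exploit that $e_1$ is \emph{maximum} in the contractibility ordering, so $\lambda(e) \leq \lambda(e_1) < w(Q^*)/|Q^*|$ for every $e \in E(T-L)$. Assuming $Q^* \neq E^+(v_0)$, let $H^*$ be its internal subtree; every leaf edge $e$ of $H^*$ has its head inside $H^*$, hence not a leaf of $T$, so $\lambda(e)$ is defined and lies strictly below $w(Q^*)/|Q^*|$. Thus alternative $\mathbf{(i)}$ of Lemma~\ref{internallem} fails for each such $e$, forcing alternative $\mathbf{(ii)}$: pulling the frontier back past one leaf of $H^*$ gives a cut $Q'$ with $|E(H')| < |E(H^*)|$ and $w(Q')/|Q'| \geq w(Q^*)/|Q^*|$. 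As $w(Q^*)/|Q^*|$ is already optimal, $Q'$ is again optimal and still satisfies $w(Q')/|Q'| > \lambda(e_1)$, so the same step applies again. Iterating strictly decreases $|E(H)|$, so after finitely many steps the internal subtree is trivial and the cut equals $E^+(v_0)$, whose average weight is $\alpha_0$ by definition. This would force the optimum of $T$ to equal $\alpha_0$, contradicting $w(Q^*)/|Q^*| > \lambda(e_1) > \alpha_0$. Hence this case is impossible, which completes the proof that an optimal cut of $T$ avoiding $e_1$ exists; combined with the first paragraph, this gives that the optima of $T'$ and $T$ agree and that every optimal cut of $T'$ is optimal in $T$.
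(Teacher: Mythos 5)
Your proposal is correct, and it rests on the same two pillars as the paper's proof of Lemma~\ref{cqlemma} --- Lemma~\ref{replacementlemma} and the internal-subtree alternative of Lemma~\ref{internallem} --- but it assembles them along a noticeably different line. Both arguments reduce the claim to exhibiting an optimal cut of $T$ that avoids $e_1$ (the paper merely asserts this reduction; you verify the cut correspondence under contraction, which is worth doing), and both invoke Lemma~\ref{replacementlemma} to kill the case $\lambda(e_1) > w(Q^*)/|Q^*|$. From there the paper makes a single extremal choice: among optimal cuts it takes one whose internal subtree is smallest, applies Lemma~\ref{internallem} once, rules out alternative (ii) by that minimality, and obtains from alternative (i) a leaf edge $e$ with $\lambda(e) > w(Q_0)/|Q_0| \geq \lambda(e_1)$ --- an ordering-based contradiction with the maximality of $e_1$, which absorbs the equality case $\lambda(e_1) = w(Q_0)/|Q_0|$ with no separate treatment. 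You instead split on that equality: you settle $\lambda(e_1) = w(Q^*)/|Q^*|$ by a mediant-equality construction (a variant of Lemma~\ref{replacementlemma} the paper never states, though it follows from the same computation), and in the strict case $w(Q^*)/|Q^*| > \lambda(e_1)$ you turn the maximality of $e_1$ around to defeat alternative (i) for \emph{every} leaf edge, so alternative (ii) applies repeatedly and drives the cut down to $E^+(v_0)$, producing the value-based contradiction $\alpha_0 = w(Q^*)/|Q^*| > \lambda(e_1) > \alpha_0$. Your iteration is essentially the paper's minimal-subtree choice unrolled into an explicit descent (both terminate because $|E(H)|$ strictly decreases); the cost is an extra case and an auxiliary equality lemma, while the gain is that one sees concretely how any putative cut beating $\alpha_0$ collapses to $E^+(v_0)$, which also illuminates why the algorithm's stopping rule $\lambda(e_1) \leq \alpha_0$ is the correct one. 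One shared caveat, not a gap relative to the paper: like the paper's own lemmas, your mediant computations and your step ``trivial internal subtree forces $Q = E^+(v_0)$'' implicitly use the convention that cuts carry no redundant edges, i.e.\ that root-separating cuts are in bijection with connected subtrees containing $v_0$.
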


\begin{proof} It is sufficient to show that {\it there exists} an edge cut $Q_0$ of $T$ which achieves the maximum average weight among all root-separating cuts of $T$ and $e_1 \notin Q_0$.\\

Let $Q_0$ be an optimal average weight cut in $T$. Assume $e_1 \in Q_0$. Observe that if $\frac{w(Q_0)}{|Q_0|} = \alpha_0 $ then $\lambda(e_1) > \frac{w(Q_0)}{|Q_0|} $ and by Lemma \ref{replacementlemma}
 we may define $Q' = (Q_0 \setminus \lbrace e_1 \rbrace) \cup E^+(e_1)$
 and we have $\frac{w(Q')}{|Q'|} > \frac{w(Q_0)}{|Q_0|}$, a contradiction to the optimality of $Q_0$. So $\frac{w(Q_0)}{|Q_0|} > \alpha_0 $ must be true,
and $Q_0 \neq E^+(v_0)$, and also
\begin{equation}
\label{EQ: 20180713-1}
\frac{w(Q_0)}{|Q_0|} \geqslant \lambda(e_1).
\end{equation}

Among all optimal average weight cuts $Q_0$ satisfying the above, we may assume $Q_0$ has the smallest possible internal subtree $H_0$. Note that $H_0 \neq \emptyset$. Then by Lemma \ref{internallem} there is a leaf edge $e \in H_0$ with
\begin{equation}
\label{EQ: 20180713-2}
\lambda(e) > \frac{w(Q_0)}{|Q_0|}.
\end{equation}
 But since $e \neq e_1$ we have $\lambda(e) > \lambda(e_1)$
(by Inequalities~(\ref{EQ: 20180713-1}) and (\ref{EQ: 20180713-2})),
 a contradiction to the definition of $e_1$ as the maximum edge in the contractibility ordering. So $e_1 \notin Q_0$ as desired.
\end{proof}

\subsection{Proof of Theorem \ref{maintheorem}}
 By Lemma~\ref{cqlemma}, it is sufficient to show that, after all possible contractions of edges, $E^+(v_0)$ is an optimal solution in the resulting tree. \\

Suppose on the contrary that the output of the algorithm is not the proposed cut.  Let $Q$ be an optimal cut with an internal subtree $H$ with $|H|$ as small as possible. Suppose that $Q \neq E^+(v_0)$. That is,
\begin{equation}
\label{EQ: 20180713-B1}
\frac{w(Q)}{|Q|} > \frac{w(E^+(v_0))}{|E^+(v_0)|} = \alpha_0.
\end{equation}

Apply Lemma \ref{internallem} here.
The case (ii) of Lemma \ref{internallem} does not occur since the cut $Q$ is optimal. Hence the case (i) implies the existence of a leaf $e$  in the internal subtree $H$ with
\begin{equation}
\label{EQ: 20180713-B2}
\lambda(e) > \frac{w(Q)}{|Q|}.
\end{equation}
By Inequalities~(\ref{EQ: 20180713-B1}) and (\ref{EQ: 20180713-B2}), the leaf $e$ is a contractible edge. Moreover, the maximum edge in the contractibility ordering is contractible, contradicting the assumption that the algorithm terminated.

\section{Conclusions and Remarks}

\subsection{Computational complexity of Algorithm}
In addition, we find that the computational complexity of the algorithm is polynomial and so its implementation will be efficient and scalable to large data sets. Let $|V(T-L)|=n$. The cost of step 1 is a constant. Step 2 and Step 3 are repeated at most $n$ times.
In the first loop of repeating,
the cost of step 2 is $O(n(\log_2 n))$ or $O(n^2)$
 since the sorting of $\lambda$'s is
 involved, while the cost of step 2 in each loop
 after the first sorting is at most $O(n)$
 since we only need to place one item
 $\lambda(e^*)$ to a proper position
in a well sorted sequence. The cost of step 3 in every loop is at most $O(n)$ since only updating of the graph and the revising the weight $\lambda$ are involved.
Hence, the cost of the worst case has an upper bound of $O(n^2)+n(O(n)+O(n))=O(n^2)$.

\subsection{Conjectures and open problems}

\begin{problem}
\label{PROB: average max}
Let $G$ be a  network with the source $s$ and the sink $t$ and
  associated with a weight $w: E(G \rightarrow R^+$.
Find an edge-cut $T$ separating $t$ from $s$ such that the average weight of $T$ is maximum among all such edge-cuts.
\end{problem}

The problems of finding cuts with total weights maximized or minimized
 have been well studied.
One is known as an NP-complete problem, while another is polynomial
\cite{Bondy1976,Cormen2009,Goldberg1988}.
But few study have been done yet for finding an edge-cut with the average weight maximized (or minimized).
We would like to propose the following conjecture.

\begin{conjecture}
 Problem~\ref{PROB: average max} is an NP-complete problem.
\end{conjecture}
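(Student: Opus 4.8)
The plan is to prove the two halves of NP-completeness separately for the natural decision version of Problem~\ref{PROB: average max}: given a network $G$ with source $s$, sink $t$, positive weights $w$, and a rational threshold $k$, decide whether there is an $s$--$t$ edge-cut $Q$ with $w(Q)/|Q| \geq k$. Membership in NP is the routine half: a nondeterministic guess of the vertex side $S$ (with $s \in S$, $t \notin S$) serves as a certificate, and one verifies in polynomial time that the induced cut $Q = \delta^+(S)$ separates $s$ from $t$ and satisfies $w(Q) \geq k|Q|$, the latter being a single integer comparison once the weights and $k$ are cleared of denominators. The entire burden is therefore the NP-hardness reduction.

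First I would record the parametric reformulation that makes the objective tractable to reason about: since all weights are positive,
\[
\frac{w(Q)}{|Q|} \geq k \iff \sum_{e \in Q}\bigl(w(e) - k\bigr) \geq 0,
\]
so deciding the average threshold is equivalent to deciding whether the maximum of the signed cut functional $\sum_{e \in Q}(w(e)-k)$ over $s$--$t$ cuts is nonnegative. Because any assignment of signed edge weights $\sigma(e)$ can be realized as $w(e) - k$ by choosing $k$ larger than $\max_e(-\sigma(e))$ and setting $w(e) = \sigma(e) + k > 0$, an instance of the average-cut problem can encode the question ``does the maximum signed $s$--$t$ cut reach $0$?''. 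This is the gateway to the \textsc{Max-Cut} family, which is NP-complete, so the intended reduction is from weighted \textsc{Max-Cut}, using antiparallel arcs $(u,v),(v,u)$ to simulate each undirected edge (exactly one of the pair lies in $\delta^+(S)$ precisely when $u$ and $v$ are separated) together with a source/sink gadget that imposes the $s$--$t$ constraint while leaving the original vertices free to range over both sides.

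The hard part --- and the reason the statement is posed as a conjecture rather than a theorem --- is controlling the denominator $|Q|$. A naive antiparallel-arc encoding with a single uniform weight $W$ is degenerate: every separated original edge contributes exactly one arc of weight $W$ to the cut and no arc otherwise, so $w(Q) = W|Q|$ identically and the average collapses to the constant $W$ regardless of how many edges are cut, erasing the \textsc{Max-Cut} signal entirely. Thus the reduction cannot merely inflate the numerator; it must engineer weights so that the numerator and the cardinality respond at different rates to the combinatorial choice, so that a large average is attainable if and only if the underlying instance admits a cut of the target value. I expect this to require either an anchoring sub-structure that pins $|Q|$ to a predictable value (for instance a fixed bundle of edges forced into every $s$--$t$ cut) together with a second family of variable-weight edges carrying the objective, or else a reduction from a ratio-type NP-complete problem (such as a cardinality-constrained densest-subgraph or balanced-partition variant) whose objective already normalizes by a counted quantity, so that the $|Q|$ in the denominator is matched structurally rather than fought against.

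Finally I would close the loop by verifying, once a ratio-preserving gadget is fixed, that the source/sink attachment does not spuriously create high-average cuts of its own, that all constructed weights are positive and of size polynomial in the input (so that $k$ admits a polynomial-length encoding), and that the equivalence between a ``yes'' instance of the source problem and the existence of an $s$--$t$ cut of average at least $k$ holds in both directions; together with the NP membership established above, this would settle the conjecture.
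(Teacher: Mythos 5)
You should note first that the paper itself contains no proof of this statement: it is posed explicitly as a conjecture in the concluding section, immediately after the remark that ``few study have been done yet'' on the average-weight cut problem. So there is no paper argument to match your proposal against; the only question is whether your proposal constitutes a proof on its own, and it does not. Your NP-membership argument is fine (with one small caveat: an edge-cut separating $t$ from $s$ need not have the form $\delta^+(S)$ for a vertex bipartition --- it can be any separating edge set --- but this is harmless, since the edge set $Q$ itself serves as the certificate and separation is checkable in polynomial time). Your parametric reformulation $w(Q)/|Q| \geq k \iff \sum_{e \in Q}(w(e)-k) \geq 0$ is correct, and your observation that the naive antiparallel-arc encoding of \textsc{Max-Cut} with uniform weight $W$ collapses to constant average $W$ is a genuine and well-spotted obstruction. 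But at exactly the point where the proof would have to live --- the construction of a ratio-preserving gadget that makes numerator and cardinality respond at different rates --- you substitute a statement of what you ``expect'' the reduction to require. That is the entire content of the conjecture, and it is absent.

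To be clear about why this is a gap and not a stylistic omission: NP-hardness claims are only as good as the explicit reduction, and the degeneracy you identified shows that the ``obvious'' reduction provably fails, so no routine completion is available. Your two proposed escape routes (an anchoring bundle of forced edges to pin $|Q|$, or a reduction from a ratio-type problem) are reasonable research directions, but neither is carried out, and for neither do you verify the both-direction equivalence, the positivity and polynomial size of weights, or the absence of spurious high-average cuts from the source/sink gadget --- checks you yourself list as remaining. In fairness, you are candid that the statement ``is posed as a conjecture rather than a theorem'' precisely because this step is hard; your writeup is best read as an accurate map of the difficulty, consistent with the paper's decision to leave the problem open, rather than as a proof of it.
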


\bibliographystyle{plain}
\bibliography{avgcutbib}

\end{document}